\theoremstyle{plain}
\newtheorem{theorem}{Theorem}
\title{Towards the Exact Simulation Using Hyperbolic Brownian Motion}
\author{Yuuki Ida and Yuri Imamura}
\date{}
\begin{document}
\maketitle
\begin{abstract}
In the present paper, 
an expansion of the transition density of Hyperbolic Brownian motion with drift is given, which is potentially useful for pricing and hedging of options under stochastic volatility models.
We work on a condition on the drift 
which dramatically simplifies the proof. 
\end{abstract}
\section{Introduction}

As is well recognized, 
``local stochastic volatility models"
can be reduced to 
Brownian motion with drift
thanks to Lamperti's transform. 
This is not the case 
when one works with 
stochastic volatility 
(henceforth SV) models 
where the stock price $ S $ and 
its instantaneous volatility $ V $ 
are modeled by a two-dimensional 
diffusion process.
One can not transform it into 
a two dimensional Brownian motion with drift in general. 

As is pointed in \cite{HL1},
however, 
most of existing stochastic volatility models are ``conformally equivalent''
to {\em hyperbolic} Brownian motion
(HBM for short)
instead; or in other words, 
many
SV diffusion processes
$ (S, V) $ can be transformed to 
HBM with drift
by a diffeomorphism.

In the present paper, 
we shall give an asymptotic expansion  
formula of the transition density
of HBM with drift
with respect to the so-called {\em McKean kernel}; 
density kernel. That is, the HBM \underline{without drift}. 
We claim that this formula 
can be used in  
numerical calculations for the
under SV models, although in this paper we will not go in depth in this direction. 

Our formula is in fact a parametrix one,
so along the line of 
Bally-Kohatsu \cite{BK}'s idea, 
we give an exact simulation interpretation of the parametrix
formula\footnote{Here the term ``exact'' is used because it is not an approximation, 
but the equality. It may be also referred to as ``unbiased''
since it is only simulate the expectation of a functional of $(S_t,V_t)$.}.

The present paper is organized as follows. In section \ref{HBM},
we briefly recall some basic facts about HBM. 
In section \ref{HBMwD}, we introduce 
a drift to the HBM, 
and describe its transition density 
by using as parametrix a HBM (Theorem \ref{mainII}. 
In section \ref{ESI}, 
we give an interpretation of the formula given in Theorem \ref{mainII}
that it gives a description of an {\em exact simulation}. 

In the present paper we restrict ourselves 
to 1) working on a simple situation
given by \eqref{HBMmu}; no drift 
in the volatility, and \eqref{condi-mu},  
which reduce the computational complexity of the proof dramatically. 
Further, 2) we omit the description of
how SV models can be transformed to HBM in this paper.
The main aim of the present paper 
is then to show that the condition \eqref{HBMmu} simplifies the proof quite a lot.


\section{Hyperbolic Brownian Motions}\label{HBM}
In this section, we recall 
basic facts about 
hyperbolic Brownian motions.

Let $ n \geq 2 $ and 
\begin{equation*}
\mathbb{H}^n
:= \{ z = (x,y) = 
(x^1, \cdots, x^{n-1}, y); x \in \mathbb{
R}^{n-1}, y > 0 \}, 
\end{equation*}
the upper half space
in $ \mathbb{R}^n $, endowed 
with the Poincar\'e metric
\footnote{A metric, at each point, is a bi-linear form on the tangent space, 
or equivalently, an element of the tensor product of the cotangent space. The convention $ (dx)^2 $
should then be understood as 
$ dx \otimes dx $, and so on. }
\begin{equation*}
ds^2 = y^{-2}((dx)^2 +(dy)^2).
\end{equation*}
The Riemannian volume element is given by
$ dv = y^{-2} dxdy $ and the distance 
$ d_{\mathbb{H}^n} (z,z') $ for $ z= (x,y) $, 
$ z' = (x',y') \in \mathbb{H}^n $ 
is given by 
\begin{equation}\label{equation2}
\cosh ( d_{\mathbb{H}^n}(z,z'))
= \frac{d_{\mathbb{R}^{n-1}}(x,x')^2 + y^2 + (y')^2}{
2 y y'}.
\end{equation}
The Laplace-Beltrami operator is 
\begin{equation*}
\Delta_n
:=y^2 \sum_{i=1}^{n-1}
\frac{\partial^2}{\partial x_i^2}
+ y^2 \frac{\partial^2}{\partial y^2}
- (n-2) y \frac{\partial}{\partial y}.
\end{equation*}
We denote by $ q_n (t,z,z') $ 
the heat kernel with respect to 
the volume element
$ dv $ of the semigroup 
generated by $ \Delta_n /2 $; that is to say, 
\begin{equation*}
\partial_t q_n = \frac{1}{2} \Delta_n q_n,
\end{equation*}
and
\begin{equation*}
\lim_{t \to 0} \int_{\mathbb{H}^n}
q_n (t,z',(x,y)) f(x,y) y^{-2} dxdy =
f(z')
\end{equation*}
for any bounded continuous function $ f $.
In other words, 
\begin{equation}\label{trd}
\mathbb{P} ( (X_t,Y_t) \in dxdy |(X_0,Y_0) =z')
= q_n (t,z',(x,y)) y^{-2} dxdy,
\end{equation}
where $ (X_t, Y_t) $ is the solution to 
the following stochastic differential equation:
\begin{equation}\label{HBMn}
\begin{split}
dX^i_t &= Y_t dW^i_t, \,\, i=1,\cdots, n-1,\\
dY_t &= Y_t dW^n_t,
\end{split}
\end{equation}
where $ W^1, \cdots, W^n $ are mutually independent Brownian motion defined on a probability space $(\Omega, \mathcal{F},\mathbb{P})$. 
The diffusion $ (X,Y) $ is the one
associated with the semigroup $ \Delta_n/2 $.

The following formulas for $ q_n $
are known (see e.g.\cite{Da} and \cite{MYII}):
\begin{theorem}\label{HKHBM}
The heat kernel with respect to 
the volume form has the following 
explicit expressions. \\
i) (McKean's kernel) 
In the case of $ n =2 $;
\begin{equation*}
q_2(t,z,z')=:p_2(t,r)=\frac{\sqrt{2}e^{-t/8}}{(2\pi t)^{3/2}}\int^\infty_r\frac{be^{-b^2/2t}}{(\cosh(b)-\cosh(r))^{1/2}}db.
\end{equation*}
ii) (Milson's formula) 
For $ n \geq 2 $, we have the following recursive relation;
\begin{equation*}
q_{n+2}(t,z,z')=:p_{n+2}(t,r)=-\frac{e^{-nt/2}}{2\pi \sinh(r)}\frac{\partial}{\partial r}p_n(t,r).
\end{equation*}
(iii) (Gruet's formula \cite{Gru}) 
For every $n\geq 2, t>0, z, z' \in \mathbb{H}^n$, it holds that
\begin{equation*}
\begin{split}
q_n(t, z,z')&=
p_n(t,r)\\
&=\frac{e^{-(n-1)^2t/8}}{\pi (2\pi)^{n/2}t^{1/2}}\Gamma\left(\frac{n+1}{2} \right)\int^\infty_0 \frac{e^{(\pi^2-b^2)/2t}\sinh(b)\sin(\pi b/t)}{(\cosh(b)+\cosh(r))^{(n+1)/2}}db,
\end{split}
\end{equation*}
where $r=d_{\mathbb{H}^2}(z,z')$.
\end{theorem}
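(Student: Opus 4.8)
These three identities are classical, so the plan is to reconstruct them from the radial form of the heat equation, working from the easiest case (odd dimension) to the hardest (Gruet). First I would use symmetry: since $\Delta_n$ and the volume element $dv$ are invariant under the isometry group of $(\mathbb H^n,ds^2)$, and this group acts transitively on pairs of points at a prescribed distance, $q_n(t,z,z')$ can depend on $(z,z')$ only through $r=d_{\mathbb H^n}(z,z')$; write it $p_n(t,r)$. In geodesic polar coordinates about $z'$ the radial volume density is $(\sinh r)^{n-1}$, and the heat equation reduces to $\partial_t p_n=\tfrac12\bigl(\partial_r^2+(n-1)\coth r\,\partial_r\bigr)p_n$ on $(0,\infty)$, together with the normalisation that $(\sinh r)^{n-1}p_n(t,r)\,dr$ is a probability measure on $(0,\infty)$ converging weakly to $\delta_0$ as $t\downarrow 0$. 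By uniqueness of the fundamental solution it then suffices, for each assertion, to exhibit a function enjoying these two properties.

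For $n=3$ the substitution $p_3=(\sinh r)^{-1}u$ and the elementary identity $\Delta_3^{\mathrm{rad}}\bigl((\sinh r)^{-1}u\bigr)=(\sinh r)^{-1}(u''-u)$ turn the equation into $\partial_t u=\tfrac12 u''-\tfrac12 u$, i.e.\ the heat equation on the half-line with constant killing rate $1/2$; its odd point-source solution $u(t,r)=(2\pi t)^{-3/2}e^{-t/2}\,r\,e^{-r^2/2t}$ yields the elementary kernel $p_3$, and, together with (ii), all odd-dimensional kernels. I would prove (ii) by direct verification: if $p_n$ solves the $n$-dimensional radial equation, put $\tilde p=-e^{-nt/2}(2\pi\sinh r)^{-1}\partial_r p_n$ and check, using the intertwining relation $\Delta_{n+2}^{\mathrm{rad}}\circ\bigl(-(\sinh r)^{-1}\partial_r\bigr)=\bigl(-(\sinh r)^{-1}\partial_r\bigr)\circ\bigl(\Delta_n^{\mathrm{rad}}-n\bigr)$, that $\tilde p$ solves the $(n+2)$-dimensional equation --- the factor $e^{-nt/2}$ precisely compensating the constant $-n$ --- while a short computation gives the $t\downarrow 0$ normalisation; uniqueness then gives $\tilde p=p_{n+2}$. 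For McKean's formula (i) I would use the analogous ``half-step'' descent from $n=3$ to $n=2$: one verifies that $\sqrt 2\,e^{3t/8}\int_r^\infty\sinh(s)\,(\cosh s-\cosh r)^{-1/2}\,p_3(t,s)\,ds$ solves the two-dimensional radial equation, the singular kernel $(\cosh s-\cosh r)^{-1/2}$ acting as a Weyl fractional integral of order $1/2$ --- a ``square root'' of $-(\sinh r)^{-1}\partial_r$ --- with the differentiations under the integral regularised by one integration by parts near $s=r$; substituting the explicit $p_3$ reproduces exactly the stated integral, and Laplace's method gives the initial condition.

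Part (iii) is the genuinely hard one, and here I would not guess an ansatz but go through spherical harmonic analysis on $\mathbb H^n=SO_0(n,1)/SO(n)$. Expanding $p_n$ against the elementary spherical functions $\varphi_\lambda$ with respect to the Plancherel density $|c(\lambda)|^{-2}\,d\lambda$ ($c$ the Harish-Chandra function) gives $p_n(t,r)=\mathrm{const}\cdot\int_0^\infty e^{-(\lambda^2+(n-1)^2/4)t/2}\,\varphi_\lambda(r)\,|c(\lambda)|^{-2}\,d\lambda$; one then inserts an integral representation of $\varphi_\lambda$ (a Legendre / ${}_2F_1$ function), uses the explicit $\Gamma$-function form of $|c(\lambda)|^{-2}$, interchanges the integrations, performs the Gaussian integration in $\lambda$, and finally deforms the $\lambda$-contour. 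The factor $e^{(\pi^2-b^2)/2t}$ appears on completing the square after a shift $\lambda\mapsto\lambda+ib/t$, and $\sin(\pi b/t)$ together with $\Gamma((n+1)/2)$ is produced by the reflection formula $\Gamma(z)\Gamma(1-z)=\pi/\sin(\pi z)$ when the contour is pushed past the poles of the $c$-function; controlling the growth of the integrand so as to justify this deformation is the step I expect to be the main obstacle. Since all three formulas are standard, in the paper I would present only this skeleton and refer to \cite{Da}, \cite{MYII} and \cite{Gru}; alternatively (iii) can be derived probabilistically from the skew-product decomposition of $(X,Y)$ together with the Bougerol identity and the Hartman--Watson law for the exponential functional of the driving Brownian motion.
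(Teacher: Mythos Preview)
The paper does not prove this theorem at all: it is stated as a known result with the parenthetical ``see e.g.\ \cite{Da} and \cite{MYII}'' (and the attribution to \cite{Gru} for part (iii)), and no argument is given. Your proposal is therefore not comparable to any proof in the paper; you have supplied far more than the authors do. Indeed, you anticipated exactly this in your last paragraph when you wrote that you would ``present only this skeleton and refer to \cite{Da}, \cite{MYII} and \cite{Gru}'' --- that is precisely, and entirely, what the paper does.

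As for the content of your sketch itself: the overall architecture (radial reduction by symmetry, the intertwining identity for Millson's recursion, the Weyl fractional-integral ``half-step'' for McKean, and the spherical-function/Harish-Chandra route for Gruet) is the standard one and is correct in outline. The one place I would flag as genuinely delicate is the contour deformation in (iii): the justification that the integrand decays appropriately after the shift $\lambda\mapsto\lambda+ib/t$, and that the residues from the $c$-function poles assemble into the $\sin(\pi b/t)$ factor, is where Gruet's original argument (and the probabilistic alternative via Bougerol/Hartman--Watson that you mention) does real work. But since the paper treats the whole theorem as a citation, none of this is required here.
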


\section{HBM with drift, and its parametrix}\label{HBMwD}

We consider the following 
stochastic differential equation:
\begin{equation}\label{HBMmu}
\begin{split}
dX_t &= Y_t dW^1_t 
+ \mu (X_t, Y_t) \,dt \\
dY_t &= Y_t dW^2_t, \\
(X_0,Y_0) &= (x,y) = z,
\end{split}
\end{equation}
where $ (x,y) =z \in \mathbb{H}^2 $,
$ \mu:\mathbb{H}^2 \to \mathbb{R} $
be a Lipschitz function,
bounded in $ x $ and
\begin{equation}\label{condi-mu}
|\mu (x,y)| \leq K_0 |y|,
\quad (x,y) \in \mathbb{H}^2
\end{equation}
with some positive constant $ K_0 $. 
The unique strong solution to (\ref{HBMmu})
exists, and will be denoted by $ (X^\mu, Y^\mu) =: Z^\mu $, 
while the 2-dimensional HBM 
given by \eqref{HBMn} with $ n=2 $ 
will be denoted by $ (X^0, Y^0)
=:Z^0 $.  

Put
\begin{equation*}
\begin{split}
\theta (t,z, z') 
&:= \mu (x,y) 
\frac{\partial}{\partial x} 
\log q_2 (t,(x, y), (x', y')) \\
&= \mu (x,y) \frac{\frac{\partial}{\partial x} q_2 (t,(x, y), (x', y'))}{q_2 (t,(x, y), (x', y'))}, \\
& \qquad t >0, z, z' \in \mathbb{H}^2. 
\end{split}
\end{equation*}
For $ t > 0 $ and each $ n $, 
let
$$ \Delta_n (t) := \{ 
(u_1,u_2, \cdots, u_{n}) \in [0,t]^{n} : u_1 < \cdots< u_{n} \}.
$$
\if0
and for $ t>0 $ and 
$ (s_1, \cdots, s_{n-1}) \in \Delta_{n-1}(t) $, define for $n\geq 2$
\begin{equation*}
F_n (t; s_1, \cdots, s_{n-1}, z_0, \cdots
\cdots, z_{n})
:= \prod_{i=1}^n \theta 
(s_i-s_{i-1}, z_{i-1}, z_i),
\end{equation*}
and
\begin{equation*}
F_1(t,z,z'):=\theta(t,z,z')
\end{equation*}
where $s_0=0$ and $s_n= t$. 
\fi

The following is the main theorem of the present paper:

\begin{theorem}\label{mainII}
(i) We have that 
\begin{equation}\label{theta}
|\theta (t,z,z')| \leq \frac{3K_0}{2}
\end{equation}
and therefore for each $ n \geq 2 $, $ t > 0 $ and $ (s_1, \cdots, s_{n-1}) \in \Delta_{n-1} (t) $,
the random variable 
$ \prod_{i=1}^n \theta 
(s_i-s_{i-1}, Z^0_{s_{i-1}}, Z^0_{s_i}) $, 
where $ s_0=0 $ and $ s_n =t $, 
is in $ L^\infty (\mathbb{P})$
 and  
\begin{equation*}
\mathbb{E} [\prod_{i=1}^n \theta 
(s_i-s_{i-1}, Z^0_{s_{i-1}}, Z^0_{s_i})|Z^0_t=z' ] \in L^\infty (\Delta_{n-1} (t))
\end{equation*}
for each $ t > 0 $ and $ z, z' \in \mathbb{H}^2 $. 

(ii) Set 
\begin{equation}\label{h1def}
h_1(t,z,z')=\mu (x,y)\frac{\partial }{\partial x}q_2(t,z,z') (y')^{-2}. 
\end{equation}
and
\begin{equation*}
h_n (t,z,z') := \int_{\Delta_{n-1} (t)}
\mathbb{E} [\prod_{i=1}^n \theta 
(s_i-s_{i-1}, Z^0_{s_{i-1}}, Z^0_{s_i})|Z^0_t=z' ] q_2 (t,z,z') ds_1 \cdots ds_{n-1}
\end{equation*}
for $n\geq 2$. 
Then, the series
$ \sum_{n=1}^N h_n (t, z,z')$ is absolutely convergent 
as $ N \to \infty $ uniformly 
in $ (t,z, z') $ on every compact set.

(iii) The transition density 
of $ Z^{\mu} $
is given by
\begin{equation*}
\frac{q_2(t,z,z')}{(y')^2} +\int_{\mathbb{H}^2}\int^t_0
\frac{q_2(t-s,z,z'')}{(y'')^2} \Phi(s,z'',z')dsdz'',
\end{equation*}
where $\Phi(t,z,z')=\sum^\infty_{n=1}h_n(t,z,z')$. 
\end{theorem}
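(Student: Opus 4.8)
The plan is to establish the three parts in order, using the parametrix (Duhamel/Volterra) method adapted to the hyperbolic setting. For part (i), the key observation is that the condition \eqref{condi-mu} together with the explicit McKean kernel lets us bound $\theta$ uniformly. Writing $\theta(t,z,z') = \mu(x,y)\,\partial_x\log q_2(t,z,z')$, I would first note that $q_2(t,z,z') = p_2(t,r)$ depends on $z,z'$ only through $r = d_{\mathbb{H}^2}(z,z')$, so $\partial_x\log q_2 = (\partial_r\log p_2)\,\partial_x r$. From \eqref{equation2}, $\cosh r = \big((x-x')^2 + y^2 + (y')^2\big)/(2yy')$, so $\sinh r\,\partial_x r = (x-x')/(yy')$, giving $\partial_x r = (x-x')/(yy'\sinh r)$. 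The factor $\mu(x,y)$ contributes at most $K_0 y$ by \eqref{condi-mu}, so $|\theta| \le K_0 y \cdot |\partial_r\log p_2| \cdot |x-x'|/(yy'\sinh r) = K_0\,|\partial_r\log p_2|\,|x-x'|/(y'\sinh r)$. Since $|x-x'| \le d_{\mathbb{R}}(x,x') \le \sqrt{2yy'\cosh r} \le \sqrt{2yy'}\cosh r^{1/2}$... more directly, $(x-x')^2 = 2yy'\cosh r - y^2 - (y')^2 \le 2yy'(\cosh r - 1) = 4yy'\sinh^2(r/2)$, hence $|x-x'|/(y'\sinh r) \le 2\sqrt{yy'}\,\sinh(r/2)/(y'\sinh r) \le \sqrt{y/y'}/\cosh(r/2)$, which is not yet dimensionless; the cleaner route is to bound $|x - x'| \le \sqrt{2yy'}\,e^{r/2}$ and combine with the $b$-integral representation. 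The main technical lemma is the estimate $|\partial_r\log p_2(t,r)| \cdot |x-x'|/(y'\sinh r) \le 3/2$; this I would obtain by differentiating McKean's integral formula in $r$, using $\partial_r(\cosh b - \cosh r)^{-1/2} = \tfrac12\sinh r\,(\cosh b-\cosh r)^{-3/2}$, integrating by parts to move the derivative onto $b e^{-b^2/2t}$, and comparing the resulting integral to $p_2$ itself. Once \eqref{theta} holds, $\prod_{i=1}^n\theta(s_i-s_{i-1},Z^0_{s_{i-1}},Z^0_{s_i})$ is bounded by $(3K_0/2)^n$ pointwise, hence in $L^\infty(\mathbb{P})$, and the conditional expectation inherits the same bound, so it lies in $L^\infty(\Delta_{n-1}(t))$.

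For part (ii), the bound from part (i) immediately gives $|h_n(t,z,z')| \le (3K_0/2)^n\,|\mathrm{vol}(\Delta_{n-1}(t))|\,q_2(t,z,z') = (3K_0/2)^n\,\frac{t^{n-1}}{(n-1)!}\,q_2(t,z,z')$, and $h_1$ is handled separately via \eqref{h1def} with the same type of estimate. Summing, $\sum_n|h_n| \le q_2(t,z,z')\sum_n (3K_0/2)^n t^{n-1}/(n-1)! = \frac{3K_0}{2}\,q_2(t,z,z')\,e^{3K_0 t/2}$, which converges; uniform convergence on compact sets in $(t,z,z')$ follows because $q_2$ is continuous (hence bounded on compacta) and the tail of the exponential series converges uniformly on compact $t$-intervals. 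I would also record here that $\Phi(t,z,z') = \sum_n h_n(t,z,z')$ is well-defined and satisfies the estimate $|\Phi(t,z,z')| \le C\,q_2(t,z,z')$.

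For part (iii), the strategy is the standard parametrix/Duhamel argument. Let $p^\mu(t,z,z')$ denote the transition density of $Z^\mu$ (existence and smoothness follow from the Lipschitz and boundedness hypotheses on $\mu$ together with Hörmander's condition, or can be taken as given in this setting). The generator of $Z^\mu$ is $L^\mu = L^0 + \mu(x,y)\partial_x$ where $L^0 = \tfrac12\Delta_2$ is the generator of $Z^0$. Writing the forward equation $\partial_t p^\mu = (L^\mu)^* p^\mu$ and using $q_2(t,z,z')/(y')^2$ as the parametrix (the transition density of $Z^0$), Duhamel's principle gives
\begin{equation*}
p^\mu(t,z,z') = \frac{q_2(t,z,z')}{(y')^2} + \int_0^t\!\!\int_{\mathbb{H}^2} \frac{q_2(t-s,z,z'')}{(y'')^2}\,\big((L^\mu)^* - (L^0)^*\big)_{z''} p^\mu(s,z'',z')\,dz''\,ds,
\end{equation*}
and the difference of adjoints applied to the kernel produces exactly the "one-step" term $h_1$ upon noting $\big((L^\mu)^* - (L^0)^*\big)^* = \mu\,\partial_x$ acting in the backward variable. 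Iterating this identity and using the Markov property of $Z^0$ to rewrite the $n$-fold space-time convolution of the kernel $\theta$ against $q_2$ as the conditional expectation $\mathbb{E}[\prod_{i=1}^n\theta(s_i-s_{i-1},Z^0_{s_{i-1}},Z^0_{s_i})\mid Z^0_t = z']\,q_2(t,z,z')$ integrated over $\Delta_{n-1}(t)$ — which is precisely the definition of $h_n$ — yields $p^\mu = q_2/(y')^2 + \int_0^t\int_{\mathbb{H}^2} (q_2(t-s,z,z'')/(y'')^2)\,\Phi(s,z'',z')\,dz''\,ds$, with the remainder after $N$ steps controlled by $(3K_0/2)^{N+1}t^N/N!\to 0$ by part (ii). The main obstacle is making the convolution-to-conditional-expectation identification rigorous: one must justify interchanging the order of integration (Fubini, licensed by the $L^\infty$ bounds from part (i) and the integrability of $q_2$ in the volume measure), verify that the space integral $\int_{\mathbb{H}^2} q_2(t-s,z,z'')\,\partial_{x''}q_2(s,z'',z')\,dz''$ converges absolutely — here the condition \eqref{condi-mu}, i.e. $\mu$ vanishing like $y$ at the boundary, is what tames the $\partial_x$-derivative and, as the abstract advertises, is exactly the simplification that avoids the delicate boundary analysis otherwise needed — and confirm that $\Phi$ solves the associated Volterra equation so that the series telescopes correctly against Duhamel's formula.
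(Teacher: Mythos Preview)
Your outline for (ii) and (iii) is essentially the paper's, but part (i)---the uniform bound $|\theta|\le 3K_0/2$---is where the real content lies, and your proposed route is both incomplete and different from the paper's.

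You never actually establish the bound. The lemma you formulate, $|\partial_r\log p_2|\cdot|x-x'|/(y'\sinh r)\le 3/2$, is the correct target, but the mechanism you suggest (differentiate McKean's integral $\int_r^\infty be^{-b^2/2t}(\cosh b-\cosh r)^{-1/2}\,db$ in $r$ and then integrate by parts onto $be^{-b^2/2t}$) is not carried out, and it is not clear how it would produce the clean constant $3/2$; note also that the $b=r$ endpoint of that integral is singular, so the differentiation/IBP needs care you do not supply. Your preceding geometric attempts likewise stall without the right $r$-dependence of $\partial_r\log p_2$ to cancel against.

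The paper instead exploits the other two formulas in Theorem~\ref{HKHBM}. Milson's recursion gives $\partial_r p_2(t,r)=-e^{t}\,2\pi\sinh(r)\,p_4(t,r)$, so
\[
\partial_x q_2(t,z,z')=\frac{x-x'}{yy'\sinh r}\,\partial_r p_2=-\frac{x-x'}{yy'}\,e^{t}\,2\pi\,p_4(t,r).
\]
Then Gruet's formula compares $p_4$ to $p_2$: both are integrals of $e^{(\pi^2-b^2)/2t}\sinh b\,\sin(\pi b/t)/(\cosh b+\cosh r)^{(n+1)/2}$, and since $\Gamma(5/2)/\Gamma(3/2)=3/2$ and $\cosh b\ge 1$, one obtains $e^{t}\,2\pi\,p_4\le \tfrac{3}{2}(1+\cosh r)^{-1}p_2$. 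Hence
\[
|\theta(t,z,z')|\le \frac{3K_0}{2}\cdot\frac{|y|\,|x-x'|}{yy'(1+\cosh r)}=\frac{3K_0}{2}\cdot\frac{2|y|\,|x-x'|}{(x-x')^2+(y+y')^2}\le\frac{3K_0}{2},
\]
the last step by $A^2+B^2\ge 2AB$ with $A=|x-x'|$, $B=y+y'$. The constant $3/2$ is thus exactly the Gamma-function ratio from Gruet's representation, not the output of an integration by parts on McKean's kernel.

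For (iii), a minor remark: the paper proceeds in the opposite direction from you. Rather than assuming $p^\mu$ exists and iterating Duhamel, it checks directly that the proposed series satisfies $(\tfrac12\Delta_2+\mu\,\partial_x-\partial_t)p=0$ with the correct initial data, using Feynman--Kac to handle $\partial_t$ of the time-convolution term and the Volterra identity $\Phi=h_1+h_1*\Phi$ to see the residual vanish. Your route is also standard and would work, but it requires the a~priori existence/regularity of $p^\mu$ that you flag as an extra input.
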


\begin{proof}
\if1
We set $z = z_0$, $z_i = (x_i,y_i)$ 
and $z'= z_n=(x_n, y_n)$. Then
\begin{equation}\label{sqre-int-eq}
\begin{split}
&\mathbb{E}
\left[| F_n (t; s_1, \cdots, s_{n-1}, z,Z^0_{s_1},
\cdots,Z^0_{s_{n-1}}, Z^0_t)|^2 \right]
\\&= 
\int_{\mathbb{H}^n} 
\prod_{j=1}^{n}
|\theta (s_j-s_{j-1}, z_{j-1}, z_j)|^2 
\prod_{i=1}^{n} q_2 (s_i-s_{i-1}, z_{i-1}, z_{i})y_i^{-2}
dz_1 dz_2 \cdots dz_{n-1}dz_n
\\&= 
\int_{\mathbb{H}^n} 
\prod_{i=1}^{n}
\left(|
\mu (x_{i-1}, y_{i-1})|^2 
\left(
\frac{\partial}{\partial x_{i-1}}q_2(s_i-s_{i-1}, z_{i-1}, z_i)
\right)^2\right.
\\
&\quad \times\left. \left(\frac{q_2 (s_i-s_{i-1}, z_{i-1}, z_{i})}{q_2 (s_i-s_{i-1}, z_{i-1}, z_{i})^2}\frac{1}{y_i^2}\right)\right)
dz_1 dz_2 \cdots dz_{n-1}dz,
\end{split}
\end{equation}
where we conventionally set 
$ s_n := t $ and $ s_0 := 0 $.
\fi

Since $q_n(t,z,z')=p_n(t,r(z,z'))$, 
we have that 
\begin{equation*}
\begin{split}
\frac{\partial}{\partial x} 
q_2(t,(x,y), (x',y')) 
&=\frac{\partial}{\partial x}
p_2(t,r((x,y), (x',y')))\\
&=\frac{\partial r}{\partial x}\frac{\partial p_2}{\partial r}
(t,r((x, y), (x', y')))
\\&= \frac{x - x'}
{y y'\sinh (r)}
\left(-e^{t}2 \pi \sinh (r) 
p_4(t,r((x, y), 
(x', y')))\right)\\
&=\frac{x - x'}
{y y'}
\left(-e^{t}(2 \pi )
p_4(t,r((x, y), 
(x', y')))\right)
\end{split}
\end{equation*}
by (ii) of Theorem \ref{HKHBM}.
Also, (iii) of Theorem \ref{HKHBM} tells us that 
\if1
\begin{equation*}
\begin{split}
\frac{p_4(t,r)}{p_2(t,r)} 
&=
\frac{e^{(-3^2 + 1)\frac{t}{8}}}{ (2\pi)^{\frac{(4-2)}{2}}}\frac{\Gamma(\frac{5}{2})}{\Gamma(\frac{3}{2})}
\frac{\int_0^{\infty}
\frac{e^{\frac{\pi^2 -b^2}{2t}}\sinh (b) \sin (\frac{\pi b }{t})}{(\cosh (b) + \cosh (r))^{\frac{5}{2}}}db}
{\int_0^{\infty}
\frac{e^{\frac{\pi^2 -b^2}{2t}}\sinh (b) \sin (\frac{\pi b }{t})}{(\cosh (b) + \cosh (r))^{\frac{3}{2}}}db}
\\&\leq
\frac{e^{-t}}{(2\pi)}\frac{3}{2}
\frac{1}{(1 + \cosh (r))}.
\end{split}
\end{equation*}
\fi
\begin{equation*}
\begin{split}
e^t(2\pi)p_4(t,r)&=\frac{e^{-t/8}}{\pi(2\pi)t^{1/2}}\Gamma\left(\frac{5}{2}\right)\int_0^{\infty}
\frac{e^{(\pi^2 -b^2)/2t}\sinh (b) \sin (\pi b/t)}{(\cosh (b) + \cosh (r))^{5/2}}db\\
&= \frac{3}{2}\frac{e^{-t/8}}{\pi(2\pi)t^{1/2}}\Gamma\left(\frac{3}{2} \right)
\int_0^{\infty}
\frac{e^{(\pi^2 -b^2)/2t}\sinh (b) \sin (\pi b/t)}{(\cosh(b)+\cosh(r))(\cosh (b) + \cosh (r))^{3/2}}db\\
&\leq\frac{3}{2}\frac{1}{1+\cosh(r)}p_2(t,r)
\end{split}
\end{equation*}
since $\cosh(x)\geq 1$ for all $x$.
Therefore, we see that 
\begin{equation*}\label{equation11}
\begin{split}
|\theta (t, (x, y), (x', y'))|
& \leq 
|\mu (x,y)| \frac{|\frac{\partial}{\partial x }
q_2(t,(x, y), (x', y')) |}
{q_2( t,(x, y), (x', y'))}
\\& \leq 
\frac{3K_0}{2}
\frac{ |y||x - x'|}
{ y y' (1 + \cosh (r(z,z')))}.
\end{split}
\end{equation*}

Here, we have used \eqref{condi-mu} in the last inequality. 
By \eqref{equation2},
\begin{equation*}
\begin{split}
& \frac{ |y||x - x'|}
{ y y' (1 + \cosh (r(z,z')))}\\
& =  \frac{ |y||x - x'|}
{ y y' (1 + \frac{|x-x'|^2 + y^2 + (y')^2}{
2 y y'})}= 
\frac{ 2 |y||x - x'|}
{|x-x'|^2 + |y + y'|^2} \\
& \leq \frac{ |y|}
{|y + y'|}
\leq 1.
\end{split}
\end{equation*}
Thus we obtained \eqref{theta}.
Here in the last line 
we have used the 
following elementary inequality: 
\begin{equation*}
|x-x'|^2 + |y + y'|^2
\geq 2|x-x'||y + y'|.
\end{equation*}

Let us consider (ii). 
By \eqref{theta}, we have that for $n$ bigger than 2,
\begin{equation}\label{bdofhn}
\begin{split}
h_n (t,z,z') 
&\leq \frac{q_2 (t,z,z')}{(y')^2}
\int_{\Delta_{n-1}(t)}\mathbb{E}[ \left(\frac{3}{2}K_0\right)^{n} | Z^0_t =z'] ds_1 \cdots ds_{n-1} \\
&= \left(\frac{3}{2}K_0\right)^{n}
\frac{q_2 (t,z,z')}{(y')^2} 
\int_{\Delta_{n-1}(t)}ds_1 \cdots ds_{n-1} \\
&= \left(\frac{3}{2}K_0\right)^{n}
\frac{q_2 (t,z,z')}{(y')^2} \frac{t^{n-1}}{(n-1)!}.
\end{split}
\end{equation}
Here we have used
\begin{equation*}
\mathbb{E}\left[1| Z_t^0=z' \right]=\frac{q_2(t,z,z')}{(y')^2}.
\end{equation*}
Hence we have
\begin{equation*}\label{equation9}
\begin{split}
\sum_{n=1}^{\infty} |h_n (t, z,z')| 
& \leq \frac{q_2 (t,z,z')}{(y')^2} \sum_{n=1}^{\infty}  \left(\frac{3}{2}K_0\right)^{n}
\frac{t^{n-1}}{(n-1)!} 
\\& = \frac{3}{2}K_0 \frac{q_2 (t,z,z')}{(y')^2}
\sum_{n=0}^{\infty}  \left(\frac{3}{2}K_0 t \right)^{n}
\frac{1}
{n!}
\\&= \frac{3}{2}K_0 \frac{q_2 (t,z,z')}{(y')^2}
e^{\frac{3}{2}K_0 t},
\end{split}
\end{equation*}
which complete the proof of (ii).

Finally, we shall prove (iii).  Since 
\begin{equation*}
\begin{split}
h_n (t,z,z') = \int_{\mathbb{H}^2}\int^t_0 h_1 (t-s,z,z'') h_{n-1} (s,z'',z')dsdz'',
\end{split}
\end{equation*}
we see that 
the sum $\sum^\infty_{n=1}h_n(t,z,z')
=:\Phi(t,z,z')$ satisfies 
\begin{equation}\label{equation21}
\begin{split}
\Phi(t,z,z')=
h_1 (t,z,z') + \int_{\mathbb{H}^2}\int^t_0 h_1 (t-s,z,z'') \Phi(s,z'',z')dsdz''.
\end{split}
\end{equation}
Note that since we have, by \eqref{equation9},
\begin{equation*}
\begin{split}
|\Phi(t,z,z')|&=|\sum^\infty_{n=1}h_n(t,z,z')|\\
&\leq \sum^\infty_{n=1}|h_n(t,z,z')| \leq \frac{3}{2}K_0 \frac{q_2 (t,z,z')}{(y')^2}
e^{\frac{3}{2}K_0 t},
\end{split}
\end{equation*}
we see that $\Phi$ is integrable:
\begin{equation*}
\begin{split}
\int^T_0\int_{\mathbb{H}^2}|\Phi(t,z,z')|dz'dt&\leq \frac{3}{2}K_0\int^T_0\int_{\mathbb{H}^2}\frac{q_2 (t,z,z')}{(y')^2}
e^{\frac{3}{2}K_0 t}dz'dt\\
& \leq \frac{3}{2}K_0e^{\frac{3}{2}K_0T}\int^T_0\int_{\mathbb{H}^2}\frac{q_2 (t,z,z')}{(y')^2}dz'dt=\frac{3}{2}K_0Te^{\frac{3}{2}K_0T}<\infty.
\end{split}
\end{equation*}

We know that 
\begin{equation*}
\left(\frac{1}{2}\Delta_2-\partial_t\right)q_2(t,z,z') =0,
\end{equation*}
and 
\begin{equation*}
\begin{split}
& \left(\frac{1}{2}\Delta_2-\partial_t\right)\int_{\mathbb{H}^2}\int^t_0 
\frac{q_2(t-s,z,z'')}{(y'')^{2}} \Phi(s,z'',z')dsdz'' \\
& \hspace{3cm} = -\Phi(t,z,z')
\end{split}
\end{equation*}
by Feynman-Kac formula (see e.g. \cite[Theorem 7.6]{KS}).  
Therefore, we have that 
\begin{equation*}
\begin{split}
&\left(\frac{1}{2}\Delta_2+\mu\frac{\partial}{\partial x_1}-\partial_t\right)p_2(t,z,z')\\
&=\left(\frac{1}{2}\Delta_2+\mu\frac{\partial}{\partial x_1}-\partial_t\right)\left(\frac{q_2(t-s,z,z')}{(y')^{2}}+\int_{\mathbb{H}^2}\int^t_0 \frac{q_2(t-s,z,z'')}{(y'')^{2}} \Phi(s,z'',z')dsdz''\right)\\
&=\mu\frac{\partial q_2}{\partial x_1} \frac{1}{(y'^2)}
+\int_{\mathbb{H}^2}\int^t_0
\frac{\mu}{(y'')^2} \frac{\partial q_2}{\partial x_1} (t-s,z,z'')\Phi(s,z'',z')dsdz''-\Phi(t,z,z'),
\end{split}
\end{equation*}
which is seen to be zero by \eqref{h1def} and \eqref{equation21}.

Clearly, 
the property that $ p_2 (t,z,z')
dz $ converges to $ \delta_{z'} (dz) $ is inherited from $ q_2 $.
\end{proof}

\if1 
\begin{equation*}
\begin{split}
\end{split}
\end{equation*}
\fi 

\section{Exact Simulation Interpretation}\label{ESI}

In the spirit of Bally-Kohatsu \cite{BK}, we give the following ``exact simulation interpretation'' to Theorem \ref{mainII}.

\begin{theorem}
Let $ S_i $, $ i=1\, \cdots $, 
are independent copies of
an exponentially distributed random variable
with mean $ 1 $, 
which are also independent of the Brownian motion 
$ (W^1, W^2) $. 
Let $ T_i := S_1 + \cdots + S_i $
and $ N_t := \sum_i 1_{\{ T_i \leq t \}} $, $ t> 0 $. Then,
for any bounded measurable $ f $, we have that
\if4
\begin{equation}
\begin{split}
\mathbb{E} [|\prod_{i=1}^{N_t} \theta 
(T_i - T_{i-1}, Z^0_{T_{i-1}}, Z^0_{T_i}) |^2] < \infty.
\end{split}
\end{equation}
and 
\fi
\begin{equation*}
\begin{split}
\mathbb{E} [ f(Z^\mu_t) ]
= e^t \mathbb{E} [\prod_{i=1}^{N_t} \theta 
(T_i - T_{i-1},  Z^0_{T_{i-1}}, Z^0_{T_i})
f(Z^0_t)].
\end{split}
\end{equation*}
\end{theorem}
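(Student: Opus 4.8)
The plan is to identify the right-hand side as a series expansion indexed by the number of jumps $N_t$ of the Poisson-type process, and then to match it term by term with the parametrix series $\Phi = \sum_n h_n$ from Theorem \ref{mainII}. First I would condition on $N_t = n$; since $N_t$ is a standard Poisson process of rate $1$, we have $\mathbb{P}(N_t = n) = e^{-t} t^n / n!$, and conditionally on $N_t = n$ the ordered jump times $(T_1,\dots,T_n)$ are distributed as the order statistics of $n$ i.i.d.\ uniforms on $[0,t]$, i.e.\ with density $n!/t^n$ on the simplex $\Delta_n(t)$. Using the independence of the $S_i$'s from $(W^1,W^2)$ and the fact that $Z^0$ is the HBM without drift, I would write
\begin{equation*}
\begin{split}
&e^t\,\mathbb{E}\!\left[\prod_{i=1}^{N_t}\theta(T_i-T_{i-1},Z^0_{T_{i-1}},Z^0_{T_i})\,f(Z^0_t)\right]\\
&\quad = e^t\sum_{n=0}^\infty \frac{e^{-t}t^n}{n!}\cdot\frac{n!}{t^n}\int_{\Delta_n(t)}\mathbb{E}\!\left[\prod_{i=1}^{n}\theta(s_i-s_{i-1},Z^0_{s_{i-1}},Z^0_{s_i})\,f(Z^0_t)\right]ds_1\cdots ds_n,
\end{split}
\end{equation*}
where $s_0=0$ and the $n=0$ term is simply $\mathbb{E}[f(Z^0_t)]$. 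The bound \eqref{theta} together with the $L^\infty$ statement in part (i) of Theorem \ref{mainII} and the summability estimate from part (ii) justify interchanging expectation and summation (dominated convergence), so the manipulation is legitimate; this is essentially the only analytic subtlety.

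Next I would evaluate each term by inserting the conditional density of $Z^0_t$. Writing $\mathbb{E}[\,\cdot\,f(Z^0_t)] = \int_{\mathbb{H}^2} \mathbb{E}[\,\cdot\,|Z^0_t=z']\,f(z')\,q_2(t,z,z')(y')^{-2}\,dz'$ and recognizing that the integrand $\int_{\Delta_n(t)}\mathbb{E}[\prod_{i=1}^n\theta(s_i-s_{i-1},Z^0_{s_{i-1}},Z^0_{s_i})|Z^0_t=z']\,q_2(t,z,z')\,ds_1\cdots ds_n$ over the simplex $\Delta_{n-1}(t)$ — careful here with the index shift: the expansion in Theorem \ref{mainII} has $s_n = t$ fixed, so integration is over $\Delta_{n-1}(t)$, whereas the Poisson picture integrates over $\Delta_n(t)$ with the last variable free; these agree because the $n$-th jump time being $\le t$ versus the terminal time $t$ contributing the last factor is exactly the bookkeeping that converts one simplex into the other. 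Once the indices are aligned, the $n$-th term equals $\int_{\mathbb{H}^2} h_n(t,z,z')\,f(z')\,dz'$ for $n\ge 1$, and the $n=0$ term equals $\int_{\mathbb{H}^2}\frac{q_2(t,z,z')}{(y')^2}f(z')\,dz'$.

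Summing over $n$ and using $\Phi = \sum_{n\ge 1} h_n$ gives
\begin{equation*}
e^t\,\mathbb{E}\!\left[\prod_{i=1}^{N_t}\theta(T_i-T_{i-1},Z^0_{T_{i-1}},Z^0_{T_i})\,f(Z^0_t)\right] = \int_{\mathbb{H}^2}\left(\frac{q_2(t,z,z')}{(y')^2} + \Phi'\right) f(z')\,dz',
\end{equation*}
where the bracketed quantity should be read as the transition density of $Z^\mu$ as given in part (iii) of Theorem \ref{mainII} — except that part (iii) writes the density in the convolution form $\frac{q_2(t,z,z')}{(y')^2} + \int_{\mathbb{H}^2}\int_0^t \frac{q_2(t-s,z,z'')}{(y'')^2}\Phi(s,z'',z')\,ds\,dz''$, so I would first show this convolution form equals $\frac{q_2(t,z,z')}{(y')^2} + \Phi(t,z,z')$-against-$f$ only after one more layer of the parametrix iteration, or more cleanly, directly recognize the Poisson-indexed series as the Duhamel/parametrix series whose sum is the $Z^\mu$ density by part (iii). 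Hence $e^t\,\mathbb{E}[\prod_i\theta(\cdots)f(Z^0_t)] = \mathbb{E}[f(Z^\mu_t)]$, which is the claim.

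The main obstacle I anticipate is purely combinatorial-notational rather than deep: correctly tracking which simplex ($\Delta_n(t)$ vs.\ $\Delta_{n-1}(t)$) each term lives on and confirming that the factor $e^t$ exactly cancels the $e^{-t}$ from the Poisson mass at every order, so that the weights $\mathbb{P}(N_t=n)$ combined with the uniform order-statistics density reproduce precisely the Lebesgue measure on the simplex that appears in the definition of $h_n$. The probabilistic content — independence of the exponential clocks from the Brownian motion, and the strong Markov property of $Z^0$ used implicitly when writing $\mathbb{E}[\prod_i\theta(\cdots)]$ as an iterated integral against $q_2$ — is standard, and the convergence/integrability needed to swap sum and expectation is already supplied by parts (i) and (ii) of Theorem \ref{mainII}.
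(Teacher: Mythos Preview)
Your approach matches the paper's: condition on $\{N_t=k\}$, use the joint law of the jump times to reduce to an integral over $\Delta_k(t)$ with the factor $e^{-t}$, swap sum and expectation via the bound \eqref{theta}, and identify the result with the series in Theorem~\ref{mainII}(iii).

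The one place you waver is exactly where you need to be precise, and your intermediate claim that the $n$-th term equals $\int_{\mathbb{H}^2} h_n(t,z,z')f(z')\,dz'$ is not correct. On $\{N_t=k\}$ the product $\prod_{i=1}^k\theta(T_i-T_{i-1},Z^0_{T_{i-1}},Z^0_{T_i})$ terminates at $Z^0_{T_k}$, not at $Z^0_t$; writing the expectation as an iterated integral against transition densities therefore produces an extra factor $q_2(t-s_k,z_k,z')/(y')^2$ linking $z_k$ to the endpoint. After integrating out $s_1,\dots,s_{k-1}$ and $z_1,\dots,z_{k-1}$ the paper identifies the $k$-th term with
\[
\int_{\mathbb{H}^2}\!\int_0^t \frac{q_2(t-s,z,z'')}{(y'')^2}\,h_k(s,z'',z')\,ds\,dz'',
\]
whose sum over $k\ge 1$ is precisely the convolution term in the density formula of Theorem~\ref{mainII}(iii). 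So your ``more cleanly, directly recognize the Poisson-indexed series as the Duhamel/parametrix series'' is the right resolution; drop the incorrect identification with $h_n$ alone and go straight to the convolution form.
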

Even though this is an almost direct corollary 
to Theorem \ref{mainII} and Bally-Kohatsu's general theory, 
we give a self-contained proof below.
\begin{proof}
First we claim that
for a positive measurable function 
$$ G \equiv G (s_1, \cdots s_{k+1}, z_1, \cdots, z_{k+1} ) , $$ 
we have that
\begin{equation}
\begin{split}\label{claim1}
&\mathbb{E}\left[1_{\{N_t=k\}} 
G (T_1, \cdots, T_{k+1}, 
Z^0_{T_1}, \cdots, Z^0_{T_k},Z^0_t ) \right]\\
& = \mathbb{E}\left[
\int_{\Delta_{k} (t) \times [t, \infty)}
G (s_1, \cdots, s_{k+1}, 
Z^0_{s_1}, \cdots, Z^0_{s_k},Z^0_t ) 
ds_1 \cdots ds_{k}e^{-s_{k+1}}ds_{k+1} \right].
\end{split}
\end{equation}

In fact, since 
\begin{align*}
&\mathbb{E}\left[1_{\{N_t=k\}} 
G (T_1, \cdots, T_{k+1}, 
Z^0_{T_1}, \cdots, Z^0_{T_k},Z^0_t ) \right]\\
&=\mathbb{E}\left[\mathbb{E}\left[
1_{\{T_1\leq t, \cdots, T_k\leq t, T_{k+1}>t\}}
G (T_1, \cdots T_{k+1}, 
Z^0_{T_1}, \cdots, Z^0_{T_k},Z^0_t )
|\mathcal{F}^Z\right]\right] \\
&=\mathbb{E}\left[
\int_{[0,t]^k \times (t,\infty)}
G (s_1, \cdots, s_{k+1}, 
Z^0_{s_1}, \cdots, Z^0_{s_k},Z^0_t )
\mathbb{P}(T_1\in ds_1,
 \cdots, T_{k+1}\in ds_{k+1} )\right],
\end{align*}
and since the joint density of 
$ T_1, \cdots, T_k $ 
is given by
\begin{align*}
& \mathbb{P} (T_1 \in ds_1, \cdots,
T_{k+1} \in ds_{k+1} )  \\
& = 1_{\{s_{k+1}> s_k >s_{k-1}>\cdots >s_1 > 0 \}}e^{-s_{k+1}} ds_1\cdots ds_{k+1}, 
\end{align*}
we have \eqref{claim1}.
\if3
\begin{align*}
&\mathbb{E}\left[
\int_{[0,t]^k \times [t,\infty)}
G (s_1, \cdots s_{k+1}, 
Z_{s_1}, \cdots, Z_{s_k},Z_t )
\mathbb{P}(T_1\in ds_1,
 \cdots, T_{k+1}\in ds_{k+1} )\right] \\
&=\mathbb{E}\left[
\int_{\Delta_{k} (t)\times [t,\infty)}
G (s_1, \cdots s_{k+1}, 
Z_{s_1}, \cdots, Z_{s_k},Z_t ) e^{-s_{k+1}}
ds_1 \cdots ds_{k+1}\right] 
\end{align*}
\fi

In particular, 
if G is independent to $s_{k+1}$,
we have the following reduction:
\begin{equation}
\begin{split}\label{reducedC1}
&\mathbb{E}\left[1_{\{N_t=k\}} 
G (T_1, \cdots T_{k}, 
Z^0_{T_1}, \cdots, Z^0_{T_k},Z^0_t ) \right]\\
& =e^{-t} \mathbb{E}\left[
\int_{\Delta_{k} (t)}
G (s_1, \cdots s_{k}, 
Z^0_{s_1}, \cdots, Z^0_{s_k},Z^0_t ) 
ds_1 \cdots ds_{k} \right].
\end{split}
\end{equation}

We note that 
we can apply \eqref{reducedC1}
to
\begin{align*}
G_+ (s_1, \cdots s_{k}, z_1, \cdots, z_{k+1} ) = \left(\prod^k_{i=1}\theta (s_i-s_{i-1}, z_{{i-1}},z_i)f(z_{k+1})\right)_+, \\
G_- (s_1, \cdots s_{k}, z_1, \cdots, z_{k+1} ) = \left(\prod^k_{i=1}\theta (s_i-s_{i-1}, z_{{i-1}},z_i)f(z_{k+1})\right)_-,
\end{align*}
and so we have 
\begin{equation}\label{thetaf}
\begin{split}
&\mathbb{E}\left[\prod^k_{i=1}1_{\{N_t=k\}}\theta (T_i-T_{i-1}, Z^0_{T_{i-1}},Z^0_{T_i}) f(Z^0_t) \right]
\\
& = 
e^{-t} 
\mathbb{E}\left[
\int_{\Delta_{k} (t)}
\prod^k_{i=1}\theta (s_i-s_{i-1}, Z^0_{s_{i-1}},Z^0_{s_i}) f(Z_t)
ds_1 \cdots ds_{k} \right].
\end{split}
\end{equation}
Since we know from (i) of Theorem
\ref{mainII} that 
$ \prod^k_{i=1}\theta (s_i-s_{i-1}, Z^0_{s_{i-1}},Z^0_{s_i}) \in L^\infty (\mathbb{P}) $, we see that
$ \prod^k_{i=1}\theta (s_i-s_{i-1}, Z^0_{s_{i-1}},Z^0_{s_i}) f(Z^0_t) $ 
is in $ L^1 (\mathbb{P}) $
by the requirement that 
$ f (Z^0_t) \in L^1 (\mathbb{P}) $.
Therefore, the right-hand-side of 
\eqref{thetaf} is equal to 
\begin{align*}
& e^{-t}
\int_{\Delta_{k} (t)}
\mathbb{E}\left[ \prod^k_{i=1}\theta (s_i-s_{i-1}, Z^0_{s_{i-1}},Z^0_{s_i}) f(Z^0_t) \right]
ds_1 \cdots ds_{k}.
\end{align*}
Noting that
\begin{align*}
& \mathbb{E}\left[ \prod^k_{i=1}\theta (s_i-s_{i-1}, Z^0_{s_{i-1}},Z^0_{s_i}) f(Z^0_t) \right]
\\
&= 
\int_{(\mathbb{H}^2)^{k+1}}
\prod^k_{i=1} h_1 (s_i -s_{i-1},z_{i-1},z_i )
f(z') \frac{q_2 (t-s_k, z_k, z')}{(y')^2} \,dz_1 \cdots dz_k dz', \\ 
\end{align*}
we obtain that
\begin{equation}\label{thetaz}
\begin{split}
&\int_{\Delta_{k} (t)}
\mathbb{E}\left[ \prod^k_{i=1}\theta (s_i-s_{i-1}, Z^0_{s_{i-1}},Z^0_{s_i}) f(Z^0_t) \right]
ds_1 \cdots ds_{k} \\
&=
\int_{\mathbb{H}^2} \left( \int_{(\mathbb{H}^2 \times [0,t])^k}
\prod^k_{i=1} h_1 (s_i -s_{i-1},z_{i-1},z_i ) 
\frac{q_2 (t-s_k, z_{k},z')}{(y')^2} ds_{i}dz_i \right)
f(z') \,dz'\\
&= \int_{\mathbb{H}^2} \left( \int_{\mathbb{H}^2 \times [0,t]}
h_k (s_k,z'',z' ) 
\frac{q_2 (t-s_k, z,z'')}{(y'')^2} ds_{k} dz'' \right)
f(z') \,dz',
\end{split}
\end{equation}
which is bounded by 
\begin{equation*}
\begin{split}
& \left(\frac{3}{2}K_0\right)^{k}\frac{t^{k-1}}{(k-1)!}
\int_{\mathbb{H}^2}
\frac{q_2 (t,z,z')}{(y')^2} |f(z')| \,dz'\\
& = \left(\frac{3}{2}K_0\right)^{k}\frac{t^{k-1}}{(k-1)!}
\mathbb{E} [|f(Z^0_t)|],
\end{split}
\end{equation*}
as we see from \eqref{bdofhn}.  
Therefore, 
we can change the order between 
the summation and the expectation in 
\begin{align*}
&\mathbb{E}\left[\prod^{N_t}_{i=1} \theta (T_i-T_{i-1}, Z^0_{T_{i-1}},Z^0_{T_i})f(Z^0_t)\right]\\
&=\mathbb{E}\left[f(Z^0_t)1_{\{N_t=0\}}+\sum^{\infty}_{k=1}\prod^k_{i=1}1_{\{N_t=k\}}\theta (T_i-T_{i-1}, Z^0_{T_{i-1}},Z^0_{T_i}) f(Z^0_t) \right]. \\
\end{align*}
On the other hand,
by \eqref{thetaz},
\begin{equation*}
\begin{split}
&\mathbb{E}\left[f(Z^0_t)1_{\{N_t=0\}}\right] +\sum^{\infty}_{k=1}
\mathbb{E} \left[ \prod^k_{i=1}
1_{\{N_t=k\}}\theta (T_i-T_{i-1}, Z^0_{T_{i-1}},Z^0_{T_i}) f(Z^0_t) \right]
\\
&= e^{-t} \int_{\mathbb{H}^2}
\frac{q_2(t,z,z')}{(y')^2} f(z') \,dz' \\
&+
e^{-t} \sum_{k=1}^\infty 
\int_{\mathbb{H}^2} f(z') dz'
\int_{\mathbb{H}^2}\int^t_0
\frac{q_2(t-s,z,z'')}{(y'')^2} h _k (s,z'',z')dsdz'' \\
&= e^{-t} \int_{\mathbb{H}^2}
\frac{q_2(t,z,z')}{(y')^2} f(z') \,dz' \\
&+
e^{-t} 
\int_{\mathbb{H}^2} f(z') dz'
\int_{\mathbb{H}^2}\int^t_0
\frac{q_2(t-s,z,z'')}{(y'')^2} \Phi (s,z'',z')dsdz'' \\
&= e^{-t} \mathbb{E} [ f(Z^{\mu}_t)],
\end{split}
\end{equation*}
where the last equality is valid by
(iii) of Theorem \ref{mainII}.

\if5
Since we know from (ii) of 
Theorem \ref{mainII} that 
$ \sum_n h_n $ is uniformly convergent, 

we can apply \eqref{claim1}
to obtain that 
\begin{equation}\label{exact01}
\begin{split}
& \mathbb{E}\left[\prod^{N_t}_{i=1} | \theta (T_i-T_{i-1}, Z_{T_{i-1}},Z_{T_i})f(Z_t) |^2 \right] \\
&=e^{-t}
\sum_{k=0}^\infty \mathbb{E}\left[ \int^{t}_0\int^{s_k}_0 \cdots \int^{s_2}_0\prod^{k}_{i=1}|\theta\left(s_i-s_{i-1},Z_{s_{i-1}}, Z_{s_i}\right)f(Z_t)|^2ds_1\cdots ds_k\right] \\
& = e^{-t}
\sum_{k=0}^\infty \int^{t}_0\int^{s_k}_0 \cdots \int^{s_2}_0
\mathbb{E} \left[| F_{k} (t; s_1, \cdots, s_{k}, z, Z^0_{s_1},
\cdots,Z^0_{s_{k-1}}, Z^0_{s_k})f(Z_t)|^2 \right] ds_1\cdots ds_k \\
&\leq e^{-t}\sum^{\infty}_{k=0}\int_0^t \mathbb{E}\left[K\frac{t^{k}}{k!Y_{s_k}}\right] ds_k
=e^{-t}\sum^{\infty}_{k=0}
\int_0^t \mathbb{E}\left[K\frac{t^k}{k!y_0}\exp\left(\frac{s_k}{2}-W^2_{s_k}\right)\right] ds_k \\
&=e^{-t}\sum^{\infty}_{k=0}K\frac{t^k}{k!y_0}\int^t_0e^{s_k}ds_k\leq K\sum^{\infty}_{k=0}\frac{t^{k+1}}{k!y_0}=K\frac{te^t}{y_0}
\end{split}
\end{equation}
which is apparently finite. 
Here, the last inequality 
is due to \eqref{bdofhn}. \\
Applying 
\begin{equation*}
G(s_1,\cdots, s_{k+1},z_1,\cdots,z_k,z'):=\prod^k_{i=1}\theta(s_i-s_{i-1},z_{i-1},z_i)f(z')
\end{equation*}
\fi
\end{proof}

\end{document}